\documentclass[a4paper,11pt]{article}
\usepackage[T1]{fontenc}
\usepackage[utf8]{inputenc}
\usepackage{lmodern}
\usepackage[a4paper,margin=2.6cm]{geometry}
\usepackage{amsmath,amssymb,amsthm,mathtools}
\usepackage{enumitem}
\usepackage{microtype}
\usepackage{float}
\usepackage{tikz}
\usepackage{pgfplots}
\pgfplotsset{compat=1.18}
\usepackage{xcolor}
\usepackage{listings}
\usepackage[hidelinks]{hyperref}

\newtheorem{theorem}{Theorem}
\newtheorem{proposition}{Proposition}

\theoremstyle{definition}
\newtheorem{definition}{Definition}
\theoremstyle{remark}
\newtheorem{remark}{Remark}

\newcommand{\R}{\mathbb{R}}
\newcommand{\1}{\mathbf{1}}
\newcommand{\SP}{\mathrm{SP}}

\title{\sc Exact Dynamic Programming for Solow--Polasky Diversity Subset Selection on Lines and Staircases}
\author{
Michael T. M. Emmerich\thanks{ORCID: \href{https://orcid.org/0000-0002-7342-2090}{0000-0002-7342-2090}. Email: \href{mailto:michael.t.m.emmerich@jyu.fi}{michael.t.m.emmerich@jyu.fi}.}\\
Faculty of Information Technology\\
University of Jyv\"askyl\"a\\
Jyv\"askyl\"a, Finland
}
\date{}

\begin{document}
\maketitle

\begin{abstract}
This paper studies exact fixed-cardinality Solow--Polasky diversity subset selection on
ordered finite $\ell_1$ point sets, with monotone biobjective Pareto fronts and their
higher-dimensional staircase analogues as central applications. Solow--Polasky
diversity was introduced in biodiversity conservation, whereas the same
inverse-matrix expression appears in metric geometry as magnitude: for a finite
metric space $(X,d)$ with exponential similarity matrix
$Z_{ij}=e^{-q d(x_i,x_j)}$, the quantity $\1^\top Z^{-1}\1$ is the magnitude of
the scaled finite metric space $(X,qd)$ whenever the weighting is defined by the
inverse matrix. Thus, in this finite exponential-kernel setting,
Solow--Polasky diversity and magnitude are mathematically the same object viewed
through different motivations. Building on the linear-chain magnitude formula of
Leinster and Willerton, the paper gives a detailed proof of the scaled
consecutive-gap identity
$
\SP(X)=1+\sum_r \tanh(qg_r/2),
$
where the $g_r$ are the gaps between consecutive selected points. It then proves
an exact Bellman-recursion theorem for maximizing this value over all subsets of
a prescribed cardinality, yielding an $O(kn^2)$ dynamic program for an ordered
$n$-point candidate set and subset size $k$. Finally, the paper proves ordered
$\ell_1$ reductions showing that the same algorithm applies to monotone
biobjective Pareto-front approximations and, more generally, to finite
coordinatewise monotone $\ell_1$ staircases in $\R^d$. These are precisely the
ordered $\ell_1$ chains for which the $\ell_1$-distance becomes a line metric
along the chosen order, so the one-dimensional dynamic program applies without
modification. 

\noindent\textbf{Keywords:}
Solow--Polasky diversity; magnitude; metric geometry; dynamic programming;
ordered points; $\ell_1$ geometry; Pareto-front approximation.
\end{abstract}

\section{Introduction and notation}\label{sec:intro}
Diversity subset selection is the problem of choosing, from a finite candidate
set, a prescribed number of representatives that remain sufficiently separated
with respect to a given distance or similarity model. Such a question is natural
in biodiversity conservation, where one wants to quantify the effective
diversity of a collection of species, and also in multiobjective optimization,
where a finite approximation of a Pareto front should not collapse to several
nearly identical alternatives. In full generality, fixed-cardinality
Solow--Polasky subset selection is computationally hard in general metric spaces~\cite{EmmerichPereverdievaDeutz2026Metric} and remains hard even for finite point sets in the Euclidean plane~\cite{EmmerichPereverdievaDeutz2026Plane}. The point of the
present paper is more modest and more specific: it isolates an ordered tractable
case in which the metric is, in effect, a line metric. This case includes
ordered points on the real line, monotone biobjective Pareto-front
approximations under the $\ell_1$ metric\footnote{The $\ell_1$-distance is also commonly called the Manhattan distance; I use $\ell_1$ terminology throughout.}, and their higher-dimensional
staircase analogues.

Let $q>0$ and let $(Y,d)$ be a finite metric space. For a finite subset
$A=\{y_1,\dots,y_m\}\subseteq Y$, define the exponential similarity matrix
\[
Z(A)=(Z_{ij})_{i,j=1}^m,
\qquad
Z_{ij}=e^{-q d(y_i,y_j)}.
\]
Whenever $Z(A)$ is invertible, the \emph{Solow--Polasky diversity} of $A$ is
\[
\SP(A):=\1^\top Z(A)^{-1}\1.
\]
In its original interpretation, $Z$ is a biological similarity matrix and
$\SP(A)$ measures an effective number of mutually dissimilar species or
alternatives. In metric geometry, the same inverse-matrix expression is known as
the \emph{magnitude} of a finite metric space. With the exponential kernel used
above, $\SP(A)$ is precisely the magnitude of the scaled finite metric space
$(A,qd)$. Thus, in the finite exponential-kernel setting considered here,
Solow--Polasky diversity and finite magnitude are not merely analogous objects;
they are the same mathematical quantity viewed through two different
motivations. This connection has also been emphasized by Huntsman~\cite{Huntsman2023}.

The essential simplification used in this paper occurs on ordered chains. To fix notation, first
consider points on the line,
\[
X=\{x_1,\dots,x_k\}\subseteq \R,
\qquad
x_1<\cdots <x_k,
\]
with distance $d(x_i,x_j)=|x_i-x_j|$. For consecutive points write
\[
g_r:=x_{r+1}-x_r>0,
\qquad r=1,\dots,k-1.
\]
Leinster and Willerton proved, in their study of magnitude of finite metric
spaces and finite approximations of compact Euclidean sets, that a linear chain
with consecutive metric gaps $d_1,\dots,d_{k-1}$ has magnitude
$1+\sum_{r=1}^{k-1}\tanh(d_r/2)$~\cite[Theorem~4 and
Proposition~6]{LeinsterWillerton2009}. After the scaling $d_r=qg_r$, this gives
\[
\SP(X)=1+\sum_{r=1}^{k-1}\tanh(qg_r/2).
\]
Consequently, the hyperbolic-tangent identity itself is not claimed here as a
new formula. The contribution is instead to put this identity to work for
fixed-cardinality Solow--Polasky subset selection, to state the resulting exact
Bellman recursion explicitly, and to identify the ordered $\ell_1$ geometries in
which the same recursion applies without modification.

For an ordered candidate set of size $n$, the preceding formula turns the
selection of $k$ representatives into a dynamic programming problem. The
contribution of a newly selected point depends only on the gap to the previously
selected point. Hence a Bellman recursion indexed by the number of selected
points and by the last selected candidate gives an exact $O(kn^2)$ algorithm.
Let us remark here that same ordered-chain structure also gives an exact bottleneck recursion for
the Minimum Pairwise Distance (MPD) objective\footnote{In multiobjective diversity optimization this criterion is also often referred to as Max--Min Diversity.}, although that
objective is not additive.

The one-dimensional formulation is useful beyond literal subsets of the real
line. For a monotone biobjective Pareto-front approximation measured in the
$\ell_1$ metric, ordering the points along the front makes every pairwise
$\ell_1$-distance equal to the length of the corresponding interval on an
associated line. Thus the line dynamic program applies verbatim. The same idea
extends to higher dimensions via finite coordinatewise monotone $\ell_1$ staircases in $\R^d$: after
possible reversals of coordinate axes, a single ordering must sort all
coordinate projections. This provides the ordered $\ell_1$-chain condition
under which the $\ell_1$ metric becomes a line metric along the chosen order, as we will prove in this paper. 

The paper is organized as follows. Section~\ref{sec:related-work} reviews the
origin of Solow--Polasky diversity, its relation to magnitude, known positive
and negative complexity results, and its use in multiobjective diversity
optimization. Section~\ref{sec:line-formula} proves the consecutive-gap formula
on the ordered line in the notation used here. Section~\ref{sec:dp} proves the
Bellman-recursion theorem for fixed-cardinality subset selection and records the
corresponding MPD recursion. Section~\ref{sec:l1-pareto} proves the
ordered $\ell_1$ planar reduction for monotone biobjective Pareto fronts and
illustrates it with two examples: one small worked instance and one randomly
sampled front. Section~\ref{sec:l1-staircases} gives the higher-dimensional
monotone-staircase characterization.

\section{Related work}\label{sec:related-work}
Solow--Polasky diversity was introduced in biodiversity conservation as a way to
measure the effective diversity of a collection of species from pairwise
similarities or distances, motivated by statistical comparison arguments rather
than only by species counts~\cite{SolowPolaskyBroadus1993}. In the finite metric
setting used here, with similarity matrix
\[
Z_{ij}=e^{-q d(x_i,x_j)},
\]
the same inverse-matrix expression $\1^\top Z^{-1}\1$ is also the magnitude of
the scaled metric space $(X,q d)$, whenever the inverse expression is
well-defined. Thus Solow--Polasky diversity is not merely analogous to magnitude
in this setting; it is the finite exponential-kernel instance of the same
mathematical quantity. This connects the biodiversity-conservation indicator to
the broader magnitude and diversity framework of metric geometry and
biology~\cite{LeinsterMeckes2016}.

The literature contains both positive and negative algorithmic results, but for
closely related rather than identical optimization problems. Leinster and Meckes
studied maximization of diversity over distributions on a fixed similarity
matrix. Their general finite algorithm can be exponential in the number of
species, but they also identify important polynomial-time cases, including
ultrametric matrices and matrices sufficiently close to the identity
matrix~\cite{LeinsterMeckes2016}. These results are compatible with the present
work: they concern diversity-maximizing distributions for a given set, whereas
the present paper studies fixed-cardinality subset selection from a candidate set.

For the fixed-cardinality subset-selection problem, recent results show that the
general problem is computationally hard. Emmerich, Pereverdieva, and Deutz prove
NP-hardness in general metric spaces for every fixed kernel parameter
$q>0$~\cite{EmmerichPereverdievaDeutz2026Metric}. A follow-up result strengthens
this by proving NP-hardness even for finite point sets in the Euclidean plane,
again for every fixed positive kernel parameter~\cite{EmmerichPereverdievaDeutz2026Plane}.
The ordered one-dimensional and ordered $\ell_1$ Pareto-front cases treated in
this note should therefore be read as structured tractable cases rather than as
representative of the unrestricted problem.

Solow--Polasky diversity has also been used in evolutionary multiobjective
optimization and diversity optimization. Ulrich, Bader, and Thiele introduced an
indicator-based approach in which diversity of solution sets is optimized in the
context of multiobjective search~\cite{UlrichBaderThiele2010}. More recently,
Pereverdieva et al. compared Solow--Polasky diversity with other indicators for
multiobjective diversity optimization, including MPD and Riesz $s$-Energy~\cite{Uribe2024Riesz},
and analyzed properties relevant for subset selection and indicator-based
evolutionary algorithms~\cite{PereverdievaEtAl2025}.

Studies on related diversity measures have also considered dynamic-programming
approaches. MPD is closely related to Riesz $s$-energy: for a fixed finite set,
the transformed Riesz quantity
\(\left(\sum_{r<s} d(x_r,x_s)^{-s}\right)^{-1/s}\)
converges to the MPD value as \(s\to\infty\). Thus, MPD can be viewed as the
limiting bottleneck case of Riesz $s$-energy subset selection. This comparison is
useful in light of the ordered-point Riesz dynamic-programming study of
Emmerich~\cite{Emmerich2025RieszDP}. The present note is motivated by the same
Pareto-front approximation setting, but focuses on the special ordered
\(\ell_1\) geometry in which an exact dynamic program is available.

\section{The consecutive-gap formula on the line}\label{sec:line-formula}

This section records the line formula in the notation needed for the dynamic program and provides a new, detailed, derivation of the scaled version of the Leinster-Willerton expression \cite{LeinsterWillerton2009}. It starts from a simple but important  observation related to the fact that the exponential kernel function $e^{-x}$ is a function with the 'nice' property $$f(x+y)=f(x)f(y).$$

For convenience write
\[
a_r:=e^{-qg_r}\in(0,1),
\qquad r=1,\dots,k-1.
\]
If $i<j$, then
\[
|x_j-x_i|=g_i+g_{i+1}+\cdots +g_{j-1},
\]
and therefore
\begin{equation}
\label{eq:product-form}
Z_{ij}=e^{-q(x_j-x_i)}=a_i a_{i+1}\cdots a_{j-1}.
\end{equation}
Thus every similarity entry is already determined by the consecutive gaps.
The following theorem shows that the same is true for the full
Solow--Polasky value. It is the scaled Solow--Polasky form of the linear
metric-space formula of Leinster and Willerton~\cite[Theorem~4]{LeinsterWillerton2009}.

\begin{theorem}[Consecutive-gap formula; scaled Leinster--Willerton form]
\label{thm:gap-formula}
Let $x_1<\cdots <x_k$ and let $g_r=x_{r+1}-x_r$. Then
\[
\SP(X)
=
1+\sum_{r=1}^{k-1}\tanh\!\left(\frac{qg_r}{2}\right).
\]
Equivalently, with $a_r=e^{-qg_r}$,
\[
\SP(X)=1+\sum_{r=1}^{k-1}\frac{1-a_r}{1+a_r}.
\]
In particular, $\SP(X)$ depends only on the consecutive gaps $g_1,\dots,g_{k-1}$.
\end{theorem}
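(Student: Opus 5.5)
The plan is to exhibit the weighting vector $w$ with $Zw=\1$ explicitly and then sum its entries. Since $\SP(X)=\1^\top Z^{-1}\1$, it suffices to (i) show $Z$ is invertible and (ii) find $w$ with $Zw=\1$; then $\SP(X)=\1^\top w$.

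\textbf{Invertibility.} The product form \eqref{eq:product-form} shows that $Z$ is the correlation matrix of a discrete AR(1)-type chain with variable coefficients. One way to see invertibility is to take $L$ unit lower bidiagonal with subdiagonal entries $-a_{r}$ in position $(r+1,r)$. Then $LZL^\top$ is diagonal with entries $1,\,1-a_1^2,\dots,1-a_{k-1}^2$, all positive. The reason is that row $r+1$ of $LZ$ equals row $r+1$ of $Z$ minus $a_r$ times row $r$; by \eqref{eq:product-form} this vanishes in every column $j\le r$. So $Z$ is positive definite, hence invertible. Alternatively, one can check that $\det Z=\prod_r(1-a_r^2)$. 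Either route suffices, and I will use the bidiagonal factorization since it makes the structure transparent.

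\textbf{The ansatz.} Following Leinster--Willerton, I will guess that each point carries half of a "tanh" from each adjacent gap:
\[
w_i=\frac12\bigl(1+t_{i-1}+t_i\bigr)\quad\text{with}\quad t_r:=\frac{1-a_r}{1+a_r},\qquad t_0=t_k:=1 .
\]
Equivalently, set $b_r:=\frac{a_r}{1+a_r}$ and write $w_i=\frac{1}{1+a_{i-1}}+\frac{1}{1+a_i}-1$, with the boundary conventions $a_0=a_k=0$. The sum then telescopes:
\[
\sum_i w_i=\tfrac12\Bigl(k+2+2\sum_{r=1}^{k-1}t_r\Bigr)-\tfrac{k}{2}\cdot 0\ldots
\]
Rather than carry that bookkeeping here, I will simply verify that
\[
\sum_i w_i=1+\sum_{r=1}^{k-1}t_r
\]
directly from $w_i=1-b_{i-1}-b_i$, using $b_0=b_k=0$ and $1-2b_r=t_r$.

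\textbf{Key step: verifying $(Zw)_i=1$.} This is the main obstacle. Write
\[
(Zw)_i=\sum_{j\le i}\Bigl(\prod_{m=j}^{i-1}a_m\Bigr)w_j+\sum_{j>i}\Bigl(\prod_{m=i}^{j-1}a_m\Bigr)w_j .
\]
I will prove by induction on $i$ the partial-sum identity
\[
L_i:=\sum_{j\le i}\Bigl(\prod_{m=j}^{i-1}a_m\Bigr)(1-b_{j-1}-b_j)=1-b_i .
\]
The recursion is $L_i=a_{i-1}L_{i-1}+(1-b_{i-1}-b_i)$, so the inductive step reduces to
\[
a_{i-1}(1-b_{i-1})=b_{i-1},
\]
which holds because $1-b=\frac{1}{1+a}$. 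By symmetry, the right partial sum including $j=i$ is $1-b_{i-1}$. Combining the two partial sums gives
\[
(Zw)_i=L_i+R_i-w_i=(1-b_i)+(1-b_{i-1})-(1-b_{i-1}-b_i)=1 .
\]
This completes the verification.

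\textbf{Conclusion.} Finally, $\SP(X)=\1^\top w$, and $t_r=\tanh(qg_r/2)$ because
\[
\tanh\!\left(\frac{x}{2}\right)=\frac{1-e^{-x}}{1+e^{-x}} .
\]
This yields both stated forms. Dependence only on the gaps is then immediate.
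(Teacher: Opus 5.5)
Your argument is correct, and its skeleton is the paper's. Your partial sums $L_i=1-b_i=\frac{1}{1+a_i}$ and $R_i=1-b_{i-1}=\frac{1}{1+a_{i-1}}$ are exactly the paper's identities \eqref{eq:Ai} and \eqref{eq:Bi}, and your final step $(Zw)_i=L_i+R_i-w_i$ is its Step~3.

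What differs is the direction of the logic. The paper assumes a solution of $Zw=\1$ and \emph{derives} these identities by subtracting $a_iE_{i+1}$ (resp.\ $a_{i-1}E_{i-1}$) from $E_i$. This finds the weights without guessing, but it only shows that a solution, if one exists, must have this form. Existence and invertibility are taken for granted in ``let $w$ be the unique solution.'' You instead \emph{posit} the weights and verify them by a one-line induction, which establishes existence. Your factorization $LZL^\top=\mathrm{diag}(1,1-a_1^2,\dots,1-a_{k-1}^2)$ uses the same row operation the paper performs, and it supplies the invertibility (indeed positive definiteness) that the paper never proves and that is needed for $\SP(X)$ to be defined at all. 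So your version is the logically complete one, at the price of knowing the ansatz in advance.

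One correction: the $t$-form of your ansatz is wrong. Since $\frac{1}{1+a_r}=\frac{1+t_r}{2}$, the weights are $w_i=\frac12(t_{i-1}+t_i)$ with $t_0=t_k=1$, not $\frac12(1+t_{i-1}+t_i)$. Your version exceeds the true weights by $\frac12$ in every entry, which is why your telescoping line produced $\frac12\bigl(k+2+2\sum_r t_r\bigr)$ and broke off. Delete that line and state the ansatz in the correct $b$-form $w_i=1-b_{i-1}-b_i$ (with $b_0=b_k=0$), or as $\frac12(t_{i-1}+t_i)$. With the latter, $\sum_i w_i=\frac12(t_0+t_k)+\sum_{r=1}^{k-1}t_r=1+\sum_{r=1}^{k-1}t_r$ is immediate.
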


\begin{proof}
Let $w=(w_1,\dots,w_k)^\top$ be the unique solution of
\[
Zw=\1.
\]
Then
\[
\SP(X)=\1^\top Z^{-1}\1=\1^\top w=\sum_{i=1}^k w_i.
\]
So it suffices to determine the weights $w_i$ explicitly.

For each $i\in\{1,\dots,k\}$, let $E_i$ denote the $i$-th row equation of
$Zw=\1$:
\begin{equation}
\label{eq:Ei}
\sum_{j<i} Z_{ji}w_j + w_i + \sum_{j>i} Z_{ij}w_j = 1.
\end{equation}
We compute the weights from these equations.

\medskip
\noindent
\textbf{Step 1: endpoint weights.}
From \eqref{eq:product-form}, the first two row equations are
\[
E_1:\quad
w_1+a_1w_2+a_1a_2w_3+\cdots+a_1a_2\cdots a_{k-1}w_k=1,
\]
\[
E_2:\quad
a_1w_1+w_2+a_2w_3+\cdots+a_2a_3\cdots a_{k-1}w_k=1.
\]
Subtract $a_1E_2$ from $E_1$. Every term except the $w_1$-term cancels:
\[
(1-a_1^2)w_1=1-a_1.
\]
Hence
\begin{equation}
\label{eq:w1}
 w_1=\frac{1}{1+a_1}.
\end{equation}
By symmetry, comparing the last two row equations yields
\begin{equation}
\label{eq:wk}
 w_k=\frac{1}{1+a_{k-1}}.
\end{equation}

\medskip
\noindent
\textbf{Step 2: two auxiliary identities for interior indices.}
Fix $i\in\{2,\dots,k-1\}$. We first compare $E_i$ and $E_{i-1}$. By
\eqref{eq:product-form}, for $j<i-1$ one has
\[
Z_{ji}=a_{i-1}Z_{j,i-1},
\]
while for $j>i$ one has
\[
Z_{i-1,j}=a_{i-1}Z_{ij}.
\]
Therefore subtracting $a_{i-1}E_{i-1}$ from $E_i$ gives
\[
(1-a_{i-1}^2)\left(w_i+\sum_{j>i} Z_{ij}w_j\right)=1-a_{i-1}.
\]
Since $a_{i-1}\in(0,1)$, division by $1-a_{i-1}^2=(1-a_{i-1})(1+a_{i-1})$ yields
\begin{equation}
\label{eq:Bi}
 w_i+\sum_{j>i} Z_{ij}w_j=\frac{1}{1+a_{i-1}}.
\end{equation}

Next compare $E_i$ and $E_{i+1}$. For $j<i$ one has
\[
Z_{j,i+1}=a_i Z_{ji},
\]
and for $j>i+1$ one has
\[
Z_{i,j}=a_i Z_{i+1,j}.
\]
Thus subtracting $a_iE_{i+1}$ from $E_i$ gives
\[
(1-a_i^2)\left(\sum_{j<i} Z_{ji}w_j+w_i\right)=1-a_i,
\]
so that
\begin{equation}
\label{eq:Ai}
 \sum_{j<i} Z_{ji}w_j+w_i=\frac{1}{1+a_i}.
\end{equation}

\medskip
\noindent
\textbf{Step 3: explicit formula for interior weights.}
The row equation $E_i$ can be rewritten as
\[
\left(\sum_{j<i} Z_{ji}w_j+w_i\right)
+
\left(w_i+\sum_{j>i} Z_{ij}w_j\right)
-w_i
=1.
\]
Substituting \eqref{eq:Ai} and \eqref{eq:Bi} yields
\[
\frac{1}{1+a_i}+\frac{1}{1+a_{i-1}}-w_i=1.
\]
Hence for $i=2,\dots,k-1$,
\begin{equation}
\label{eq:wi-interior}
 w_i=\frac{1}{1+a_{i-1}}+\frac{1}{1+a_i}-1.
\end{equation}

\medskip
\noindent
\textbf{Step 4: summation.}
Using \eqref{eq:w1}, \eqref{eq:wk}, and \eqref{eq:wi-interior},
\begin{align*}
\SP(X)
&=\sum_{i=1}^k w_i \\
&=\frac{1}{1+a_1}
+\sum_{i=2}^{k-1}\left(\frac{1}{1+a_{i-1}}+\frac{1}{1+a_i}-1\right)
+\frac{1}{1+a_{k-1}} \\
&=2\sum_{r=1}^{k-1}\frac{1}{1+a_r}-(k-2).
\end{align*}
Now
\[
2\frac{1}{1+a_r}-1=\frac{1-a_r}{1+a_r},
\]
so
\[
\SP(X)
=1+\sum_{r=1}^{k-1}\frac{1-a_r}{1+a_r}.
\]
Finally, with $a_r=e^{-qg_r}$,
\[
\frac{1-a_r}{1+a_r}
=\frac{1-e^{-qg_r}}{1+e^{-qg_r}}
=\tanh\!\left(\frac{qg_r}{2}\right).
\]
This proves the claimed formula.
\end{proof}

The theorem shows that the global matrix quantity $\1^\top Z^{-1}\1$ collapses,
on the line, to an additive objective over consecutive selected gaps.
This is the structural fact behind the dynamic programming recursion.

\section{Fixed-cardinality subset selection and dynamic programming}\label{sec:dp}

Let
\[
Y=\{y_1,\dots,y_n\}\subseteq \R,
\qquad
y_1<\cdots <y_n,
\]
be an ordered candidate set. For a fixed cardinality $k\in\{1,\dots,n\}$, we
want to maximize the Solow--Polasky diversity over all $k$-element subsets
of $Y$.

Take an increasing index set
\[
I=\{i_1,\dots,i_k\},
\qquad
1\le i_1<\cdots <i_k\le n,
\]
and write
\[
Y_I:=\{y_{i_1},\dots,y_{i_k}\}.
\]
By Theorem~\ref{thm:gap-formula},
\begin{equation}
\label{eq:subset-gap-formula}
\SP(Y_I)
=1+\sum_{r=1}^{k-1}
\tanh\!\left(\frac{q\,(y_{i_{r+1}}-y_{i_r})}{2}\right).
\end{equation}
Hence only the distances between consecutive \emph{selected} points matter.
Define the edge weight
\begin{equation}
\label{eq:phi-def}
\phi(i,j):=\tanh\!\left(\frac{q\,(y_j-y_i)}{2}\right),
\qquad 1\le i<j\le n.
\end{equation}
Then maximizing $\SP(Y_I)$ is equivalent to maximizing
\[
\sum_{r=1}^{k-1} \phi(i_r,i_{r+1}).
\]

\begin{theorem}[Bellman recursion]
\label{thm:bellman}
For $m\in\{1,\dots,k\}$ and $j\in\{1,\dots,n\}$ define $F(m,j)$ as the maximum
value of
\[
\sum_{r=1}^{m-1}\phi(i_r,i_{r+1})
\]
over all strictly increasing index sequences
\[
1\le i_1<\cdots <i_m=j.
\]
Then
\[
F(1,j)=0
\qquad (j=1,\dots,n),
\]
and for $m\ge 2$,
\begin{equation}
\label{eq:bellman}
F(m,j)=\max_{1\le i<j}\bigl\{F(m-1,i)+\phi(i,j)\bigr\}.
\end{equation}
Moreover,
\begin{equation}
\label{eq:opt-value}
\max_{|I|=k}\SP(Y_I)
=1+\max_{1\le j\le n}F(k,j).
\end{equation}
Thus fixed-cardinality Solow--Polasky subset selection on the ordered line is
solvable in $O(kn^2)$ time and $O(kn)$ memory.
\end{theorem}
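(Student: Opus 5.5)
The argument is the standard optimal-substructure proof for a longest-path problem on a DAG. It proceeds by induction on $m$, using only that the objective $\sum_{r=1}^{m-1}\phi(i_r,i_{r+1})$ is additive over consecutive pairs. That additivity is exactly what Theorem~\ref{thm:gap-formula} supplies through \eqref{eq:subset-gap-formula}.

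\textbf{Base case and well-definedness.} For $m=1$ the only admissible sequence is $(i_1)=(j)$, and the empty sum gives $F(1,j)=0$. For $m\ge 2$, sequences ending at $j$ exist only when $j\ge m$. I will adopt the convention $F(m,j)=-\infty$ when the index set is empty, with $\max\emptyset=-\infty$. This convention makes \eqref{eq:bellman} hold trivially in the degenerate cases, since for $j<m$ every $i<j$ has $F(m-1,i)=-\infty$.

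\textbf{Inductive step for \eqref{eq:bellman}.} I prove two inequalities.

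\emph{($\le$)} Take any admissible sequence $i_1<\cdots<i_m=j$ and set $i:=i_{m-1}<j$. Its value equals $\sum_{r=1}^{m-2}\phi(i_r,i_{r+1})+\phi(i,j)$. The first part is the value of an admissible sequence of length $m-1$ ending at $i$, so it is at most $F(m-1,i)$. Hence the whole value is at most the right-hand side of \eqref{eq:bellman}.

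\emph{($\ge$)} Fix $i<j$ with $F(m-1,i)$ finite, and take a sequence attaining $F(m-1,i)$; finitely many sequences exist, so the maximum is attained. Appending $j$ preserves strict monotonicity because $i<j$. The extended sequence is admissible for $F(m,j)$ and has value $F(m-1,i)+\phi(i,j)$, which gives the reverse inequality.

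\textbf{Optimal value \eqref{eq:opt-value}.} Every $k$-subset $Y_I$ corresponds bijectively to a strictly increasing index sequence, whose last index $j=i_k$ ranges over $\{1,\dots,n\}$. By \eqref{eq:subset-gap-formula}, $\SP(Y_I)=1+\sum_{r=1}^{k-1}\phi(i_r,i_{r+1})$. Maximizing first over sequences with a fixed last index $j$ and then over $j$ yields $1+\max_j F(k,j)$. The case $k=1$ is consistent, giving value $1$. An optimal subset is recovered by storing for each $(m,j)$ the maximizing predecessor $i$ and backtracking from the maximizing $j$.

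\textbf{Complexity.} There are $kn$ table entries. Each entry takes a maximum over fewer than $n$ predecessors, and each $\phi(i,j)$ is evaluated in $O(1)$ arithmetic operations. This gives $O(kn^2)$ time. The memory is $O(kn)$ for $F$ and the predecessor table.

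No step is a real obstacle. The only point needing care is the empty-index-set convention, so that the recursion is literally correct for $j<m$. All substantive content, the additivity over consecutive gaps, is already provided by Theorem~\ref{thm:gap-formula}.
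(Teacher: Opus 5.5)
Your proof is correct and follows the same route as the paper: the base case, optimal substructure at the last predecessor, then using the gap-additive formula and maximizing over the terminal index, with $kn$ states each taking $O(n)$ work. Your explicit two-inequality argument and the $-\infty$ convention for $j<m$ make rigorous two points the paper leaves implicit: its exchange argument does not spell out both directions, and it never addresses that $F(m,j)$ is defined by an empty maximum when $j<m$.
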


\begin{proof}
The base case $F(1,j)=0$ is immediate, because a one-point subset contains no
selected gap and hence contributes no edge weight.

Now fix $m\ge 2$ and $j\in\{1,\dots,n\}$. Consider any feasible increasing chain
\[
1\le i_1<\cdots <i_{m-1}<i_m=j.
\]
Its value is
\[
\sum_{r=1}^{m-1}\phi(i_r,i_{r+1})
=
\left(\sum_{r=1}^{m-2}\phi(i_r,i_{r+1})\right)+\phi(i_{m-1},j).
\]
If the prefix $i_1<\cdots <i_{m-1}$ were not optimal among all $(m-1)$-chains
ending at $i_{m-1}$, then replacing it by a better such prefix would increase
the value of the full $m$-chain ending at $j$. Therefore every optimal
$m$-chain ending at $j$ is obtained by choosing some predecessor $i<j$, taking
an optimal $(m-1)$-chain ending at $i$, and then appending $j$. This is
exactly the recursion \eqref{eq:bellman}.

Finally, every $k$-element subset corresponds to a unique increasing index chain
ending at some $j$, and by \eqref{eq:subset-gap-formula} its full
Solow--Polasky value is
\[
1+\sum_{r=1}^{k-1}\phi(i_r,i_{r+1}).
\]
Taking the maximum over all terminal indices $j$ gives \eqref{eq:opt-value}.
The stated complexity follows from evaluating \eqref{eq:bellman} for all
$m=2,\dots,k$ and all $j=1,\dots,n$.
\end{proof}

To recover an optimal subset itself, one stores for each state $(m,j)$ a
predecessor index attaining the maximum in \eqref{eq:bellman} and backtracks
from an index $j^*$ attaining $\max_j F(k,j)$.

For instance, if
\[
Y=\left\{0,\frac14,\frac12,\frac23,1\right\}
\qquad\text{and}\qquad
k=3,
\]
then the candidate chains ending at $y_5=1$ are
\[
(1,2,5),\ (1,3,5),\ (1,4,5),\ (2,4,5),\ \dots
\]
and the dynamic program compares the values
\[
F(2,1)+\phi(1,5),\quad
F(2,2)+\phi(2,5),\quad
F(2,3)+\phi(3,5),\quad
F(2,4)+\phi(4,5).
\]
Because the objective is increasing and concave in the selected gaps, the
best $3$-subset is $\{0,\tfrac12,1\}$: it uses two equal selected gaps of
length $\tfrac12$, giving the value
\[
\SP(\{0,\tfrac12,1\})=1+2\tanh\!\left(\frac{q}{4}\right).
\]

\begin{remark}[MPD objective]
The same ordered-chain structure also makes the fixed-cardinality MPD
objective exact. For
\(
I=\{i_1<\cdots<i_k\},
\)
define
\[
\mathrm{MPD}(Y_I):=\min_{1\le r<s\le k} d(y_{i_r},y_{i_s}).
\]
On the ordered line, and hence on every ordered $\ell_1$ staircase that reduces
to a line metric, the closest selected pair is necessarily consecutive in the
chosen order. Therefore
\[
\mathrm{MPD}(Y_I)=\min_{r=1,
\dots,k-1}\bigl(y_{i_{r+1}}-y_{i_r}\bigr).
\]
This gives the exact Bellman recursion
\begin{equation}
\label{eq:maxmin-bellman}
B(1,j)=+\infty,
\qquad
B(m,j)=\max_{1\le i<j}\min\{B(m-1,i),\ y_j-y_i\},
\end{equation}
for $m\ge2$, followed by $\max_j B(k,j)$. The principle of optimality is precise
here for the same reason as above: once the predecessor $i$ of $j$ is fixed, the
best prefix is the one that maximizes its bottleneck value. In contrast to the
Solow--Polasky expression in \eqref{eq:subset-gap-formula}, however, this
objective is not additive over gaps. Its concatenation law is the bottleneck
law
\[
\mathrm{MPD}(\text{prefix followed by }j)
=
\min\{\mathrm{MPD}(\text{prefix}),\ y_j-y_i\},
\]
which is subadditive rather than additive and leads to a bottleneck rather than a
max--plus dynamic program.

It can also be regarded as the
$s\to\infty$ limiting case of Riesz $s$-energy subset selection, since for any
fixed selected set with positive pairwise distances,
\(
\left(\sum_{r<s} d(y_{i_r},y_{i_s})^{-s}\right)^{-1/s}
\longrightarrow
\min_{r<s}d(y_{i_r},y_{i_s}).
\)
This explains why ordered-point dynamic-programming ideas for Riesz energy are
closely related to, but not identical with, the exact bottleneck recursion above;
see also Emmerich~\cite{Emmerich2025RieszDP}.
\end{remark}

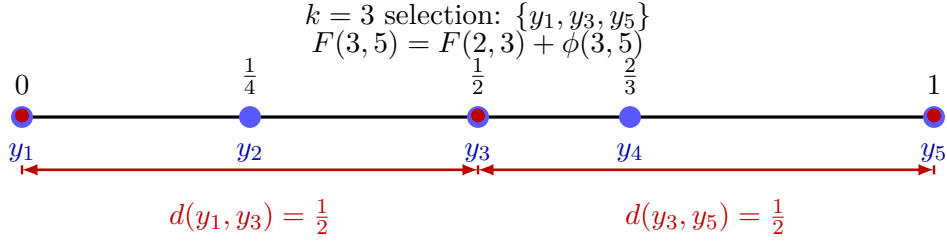
\begin{figure}[t]
\centering
\begin{tikzpicture}[x=12cm,y=1cm,>=latex]
  \draw[line width=1.2pt] (0,0) -- (1,0);

  \foreach \x/\lbl/\name in {
    0/{0}/{$y_1$},
    0.25/{\frac14}/{$y_2$},
    0.5/{\frac12}/{$y_3$},
    0.6667/{\frac23}/{$y_4$},
    1/{1}/{$y_5$}
  }{
    \fill[blue!65] (\x,0) circle (3.8pt);
    \draw[blue!65,line width=0.6pt] (\x,0) circle (3.8pt);
    \node[below=6pt,blue!70!black] at (\x,0) {\name};
    \node[above=5pt] at (\x,0) {$\lbl$};
  }

  \foreach \x in {0,0.5,1}{
    \fill[red!75!black] (\x,0.02) circle (2.6pt);
  }

  \node[align=center,anchor=south] at (0.5,0.63) {$k=3$ selection: $\{y_1,y_3,y_5\}$\\[-1mm]
  $F(3,5)=F(2,3)+\phi(3,5)$};

  \draw[red!75!black,line width=0.9pt] (0,-0.64) -- (0,-0.76);
  \draw[red!75!black,line width=0.9pt] (0.5,-0.64) -- (0.5,-0.76);
  \draw[red!75!black,<->,line width=1.0pt] (0,-0.70) -- (0.5,-0.70)
    node[midway,below=7pt] {$d(y_1,y_3)=\frac{1}{2}$};
  \draw[red!75!black,line width=0.9pt] (1,-0.64) -- (1,-0.76);
  \draw[red!75!black,<->,line width=1.0pt] (0.5,-0.70) -- (1,-0.70)
    node[midway,below=7pt] {$d(y_3,y_5)=\frac{1}{2}$};
\end{tikzpicture}
\caption{A small dynamic-programming example for $Y=\{0,\frac14,\frac12,\frac23,1\}$ and $k=3$. The five candidate points are shown in blue, while the optimal $3$-point subset is shown with red center. Since
\(
\SP(Y_I)=1+\sum_r \phi(i_r,i_{r+1}),
\)
the dynamic program only has to track the best predecessor of the current terminal point.}
\label{fig:dp-example}
\end{figure}

\section{Ordered \texorpdfstring{$\ell_1$}{l1} Pareto fronts in the plane}\label{sec:l1-pareto}

The line result extends exactly to a common two-dimensional Pareto-front setting
when distances are measured in the $\ell_1$ norm.

\begin{theorem}[Ordered $\ell_1$ planar reduction]
\label{thm:l1-planar-reduction}
Let
\[
p_i=(u_i,v_i)\in\R^2,
\qquad i=1,\dots,n,
\]
with
\[
u_1<\cdots <u_n
\qquad\text{and}\qquad
v_1>\cdots >v_n.
\]
Thus the points form an ordered monotone chain, for example a mutually
non-dominated set for a bicriteria minimization problem. Define
\[
t_i:=u_i-v_i.
\]
Then $t_1<\cdots<t_n$ and, for every $i<j$,
\[
\|p_j-p_i\|_1=t_j-t_i.
\]
Consequently, for the exponential kernel
\[
Z_{ij}=e^{-q\|p_i-p_j\|_1},
\]
the similarity matrix of the ordered planar set is exactly the similarity matrix
of the ordered line set $\{t_1,\dots,t_n\}$. Hence all formulas from
Theorem~\ref{thm:gap-formula} and the dynamic program from
Theorem~\ref{thm:bellman} apply verbatim after replacing $y_i$ by $t_i$.
In particular, fixed-cardinality Solow--Polasky subset selection for such
ordered $\ell_1$ point sets is solvable exactly in $O(kn^2)$ time and $O(kn)$
memory.
\end{theorem}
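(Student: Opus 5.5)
The plan is to reduce everything to a direct computation of the $\ell_1$-distance for ordered pairs, and then to invoke Theorem~\ref{thm:gap-formula} and Theorem~\ref{thm:bellman} as black boxes.

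\textbf{Strict ordering of the $t_i$.} First I will check that $t_1<\cdots<t_n$. For $i<j$ we have $u_j-u_i>0$ and $v_i-v_j>0$, so
\[
t_j-t_i=(u_j-u_i)+(v_i-v_j)>0.
\]

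\textbf{Distance identity.} For $i<j$ the signs of the coordinate differences are fixed by the monotonicity assumptions, so the absolute values can be removed:
\[
\|p_j-p_i\|_1=|u_j-u_i|+|v_j-v_i|=(u_j-u_i)+(v_i-v_j)=t_j-t_i=|t_j-t_i|.
\]
By symmetry of both sides, the identity $\|p_i-p_j\|_1=|t_i-t_j|$ then holds for all pairs $i,j$, including $i=j$.

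\textbf{Transfer to the line.} It follows that, for every index subset $I$, the matrix $Z(\{p_i\}_{i\in I})$ coincides entrywise with $Z(\{t_i\}_{i\in I})$. Hence the two matrices are equally invertible and have equal Solow--Polasky values. Invertibility on the line side is implicit in Theorem~\ref{thm:gap-formula}: its proof produces the unique solution $w$ of $Zw=\1$, and the elimination argument there in fact shows the system is uniquely solvable. If needed, I would add a remark that the matrix $(e^{-q|t_i-t_j|})$ is the covariance matrix of an Ornstein--Uhlenbeck process, which makes it positive definite.

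\textbf{Conclusion.} Since the reduction holds for every subset simultaneously, with $y_i:=t_i$ the subset objective $\SP(\{p_i\}_{i\in I})$ equals the line objective \eqref{eq:subset-gap-formula}, where the gaps are $t_{i_{r+1}}-t_{i_r}$. The recursion of Theorem~\ref{thm:bellman} therefore returns the exact optimum. Computing the $t_i$ costs $O(n)$, which is dominated by the $O(kn^2)$ time and $O(kn)$ memory bounds.

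The only subtle step is making sure that invertibility, and hence well-definedness of $\SP$, carries over. This is immediate once the two matrices are shown to be identical, so I do not expect a real obstacle.
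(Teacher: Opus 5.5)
Your proof is correct and follows the paper's proof essentially verbatim. The fixed signs of the coordinate differences remove the absolute values, $t_j-t_i=(u_j-u_i)+(v_i-v_j)>0$ gives the strict order, and the $\ell_1$ distance matrix (hence $Z$ for every subset) coincides with that of $\{t_1,\dots,t_n\}$, so Theorems~\ref{thm:gap-formula} and~\ref{thm:bellman} apply. Your remark on invertibility goes beyond the paper, which takes it for granted, but note that the elimination in Theorem~\ref{thm:gap-formula} as written only shows that a solution of $Zw=\1$ is unique if it exists; to get invertibility, either rerun that elimination with right-hand side $0$, which gives $\ker Z=\{0\}$, or use your positive-definiteness (Ornstein--Uhlenbeck) argument.
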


\begin{proof}
For $i<j$, monotonicity of the two coordinates gives
\[
\|p_j-p_i\|_1
=|u_j-u_i|+|v_j-v_i|
=(u_j-u_i)+(v_i-v_j).
\]
Since $t_i=u_i-v_i$, this equals
\[
(u_j-v_j)-(u_i-v_i)=t_j-t_i.
\]
Moreover, $t_j-t_i=(u_j-u_i)+(v_i-v_j)>0$ for $i<j$, so the $t_i$ are strictly
increasing. Therefore the whole $\ell_1$ distance matrix of the planar chain is
identical to the distance matrix of the ordered line set $\{t_1,\dots,t_n\}$.
The Solow--Polasky similarity matrix and objective are functions only of this
distance matrix, so the consecutive-gap formula and Bellman recursion from
Theorems~\ref{thm:gap-formula} and~\ref{thm:bellman} apply without further
modification. The complexity bound is inherited from Theorem~\ref{thm:bellman}.
\end{proof}

The following examples illustrate the planar reduction before the higher-dimensional
staircase formulation is introduced. The first is deliberately small and shows
how equal selected $\ell_1$ gaps arise after the map $t_i=u_i-v_i$. The second
uses a larger sampled Pareto-front approximation and shows the same mechanism in
a less symmetric situation.

\subsection*{Worked Pareto-front example in $\ell_1$}
Consider the ordered Pareto-front approximation
\[
p_1=(0,5),\quad p_2=(2,3),\quad p_3=\left(\frac52,\frac52\right),\quad p_4=\left(4,\frac12\right),\quad p_5=(5,0),
\]
for a bicriteria minimization problem. The points are mutually non-dominated and
already ordered by increasing first objective and decreasing second objective.
With the notation of Theorem~\ref{thm:l1-planar-reduction},
\[
t_i=u_i-v_i,
\]
one obtains the line representation
\[
t_1=-5,\qquad t_2=-1,\qquad t_3=0,\qquad t_4=\frac72,\qquad t_5=5.
\]
Therefore fixed-cardinality subset selection on this planar front is exactly the
same as the line problem for the ordered set
\[
\{-5,-1,0,\tfrac72,5\}.
\]

For the concrete computation shown here, set $q=1$ and select $k=3$ points.
A Python implementation of the Bellman recursion returns the subset
\[
\{p_1,p_3,p_5\},
\]
with selected $\ell_1$ gaps
\[
\|p_3-p_1\|_1=5,\qquad \|p_5-p_3\|_1=5.
\]
Hence
\[
\SP(\{p_1,p_3,p_5\})
=1+2\tanh\!\left(\frac52\right)
\approx 2.9732.
\]
This example makes the uniform-spacing principle especially transparent: among all
$3$-subsets of these five ordered points, the selected subset induces two equal
consecutive $\ell_1$ gaps, and therefore maximizes the sum of the concave terms
$\tanh(qd/2)$ under the fixed total gap $\|p_5-p_1\|_1=10$.

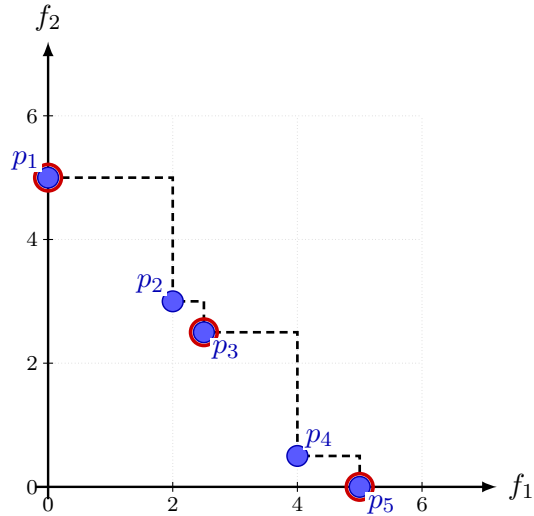
\begin{figure}[h]
\centering
\begin{tikzpicture}[x=0.82cm,y=0.82cm,>=latex]
  \draw[gray!35,densely dotted,line width=0.25pt] (0,0) grid[step=2] (6,6);

  \draw[->,line width=0.9pt] (-0.2,0) -- (7.2,0) node[right] {$f_1$};
  \draw[->,line width=0.9pt] (0,-0.2) -- (0,7.2) node[above] {$f_2$};
  \foreach \t in {0,2,4,6}{
    \draw[line width=0.4pt] (\t,0.08) -- (\t,-0.08);
    \draw[line width=0.4pt] (0.08,\t) -- (-0.08,\t);
    \node[below] at (\t,0) {\scriptsize $\t$};
    \node[left] at (0,\t) {\scriptsize $\t$};
  }

  \draw[densely dashed,line width=1.0pt]
    (0,5) -- (2,5) -- (2,3) -- (2.5,3) -- (2.5,2.5) -- (4,2.5) -- (4,0.5) -- (5,0.5) -- (5,0);

  \foreach \x/\y in {0/5,2/3,2.5/2.5,4/0.5,5/0}{
    \fill[blue!65] (\x,\y) circle (3.9pt);
    \draw[blue!70!black,line width=0.55pt] (\x,\y) circle (3.9pt);
  }

  \foreach \x/\y in {0/5,2.5/2.5,5/0}{
    \draw[red!80!black,line width=1.35pt] (\x,\y) circle (5.0pt);
  }

  \node[above left=2pt,fill=white,inner sep=1pt,text=blue!70!black] at (0,5) {$p_1$};
  \node[above left=2pt,fill=white,inner sep=1pt,text=blue!70!black] at (2,3) {$p_2$};
  \node[below right=2pt,fill=white,inner sep=1pt,text=blue!70!black] at (2.5,2.5) {$p_3$};
  \node[above right=2pt,fill=white,inner sep=1pt,text=blue!70!black] at (4,0.5) {$p_4$};
  \node[below right=2pt,fill=white,inner sep=1pt,text=blue!70!black] at (5,0) {$p_5$};
\end{tikzpicture}
\caption{An ordered Pareto-front example in $\R^2$ with the $\ell_1$ norm. The
five candidate points are shown in blue, and the optimal $3$-subset for $q=1$ is
indicated by the red boundary circles, namely $\{p_1,p_3,p_5\}$. The dashed staircase
runs from $(0,5)$ to $(5,0)$ by moving first horizontally and then vertically through
the intermediate abscissae; its total length is the $\ell_1$-distance between the
endpoints, $\|p_5-p_1\|_1=10$. The selected subset realizes the two equal consecutive
gaps $\|p_3-p_1\|_1=\|p_5-p_3\|_1=5$. A very light dotted grid is shown at every second
integer level to guide the eye without suggesting Euclidean interpolation.}
\label{fig:pareto-l1-example}
\end{figure}

A compact Python implementation of the Bellman recursion for this example is given in Listing~\ref{lst:pareto-dp} at the end of this section.

\subsection*{Random Pareto-front example on $f_2=1-f_1^2$}
As a larger illustration, let
\[
p_i=(x_i,1-x_i^2),\qquad i=1,\dots,20,
\]
where the abscissae $x_i$ are sampled uniformly at random from $[0,1]$ and then
sorted increasingly. The reproducible seed is $10$, so the points form a
strictly ordered Pareto-front approximation on the concave curve $f_2=1-f_1^2$.
Because the front is ordered and monotone, Theorem~\ref{thm:l1-planar-reduction} applies and the
$\ell_1$-based Solow--Polasky subset selection problem is exactly equivalent to a
line problem.

For the concrete computation, again set $q=1$ and choose fixed cardinality
$k=6$. Running the Bellman recursion on the resulting ordered $20$-point set
selects the indices
\(
\{1,6,10,15,18,20\},
\)
that is,
\[
\begin{aligned}
I^*=\{&(0.0446,0.9980),\ (0.3278,0.8926),\ (0.5207,0.7289),\\
      &(0.6750,0.5444),\ (0.8602,0.2601),\ (0.9966,0.0069)\}
\end{aligned}
\]
and the corresponding value is
\(
\SP(I^*)\approx 1.9590.
\)
The selected points are spread rather evenly in the induced line coordinate
$t_i=f_{1,i}-f_{2,i}=x_i^2+x_i-1$, which is exactly what the gap-additive
objective favors.

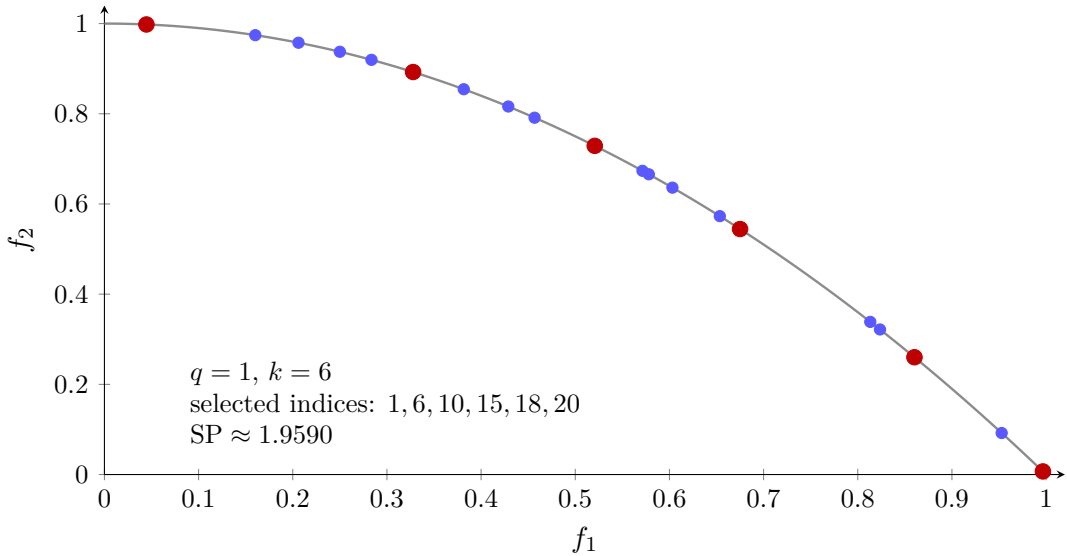
\begin{figure}[H]
\centering
\begin{tikzpicture}
\begin{axis}[
  width=0.9\textwidth,
  height=7.8cm,
  xmin=0, xmax=1.02,
  ymin=0, ymax=1.04,
  axis lines=left,
  xlabel={$f_1$},
  ylabel={$f_2$},
  enlargelimits=false,
  tick label style={font=\small},
  label style={font=\normalsize},
  clip=false
]

\addplot[domain=0:1,samples=201,smooth,black!45,line width=0.9pt] {1-x^2};

\addplot[
  only marks,
  mark=*,
  mark size=2.1pt,
  blue!65
] coordinates {
(0.0446,0.9980)
(0.1602,0.9743)
(0.2061,0.9575)
(0.2500,0.9375)
(0.2836,0.9196)
(0.3278,0.8926)
(0.3816,0.8544)
(0.4289,0.8161)
(0.4568,0.7913)
(0.5207,0.7289)
(0.5714,0.6735)
(0.5781,0.6658)
(0.6032,0.6362)
(0.6535,0.5730)
(0.6750,0.5444)
(0.8133,0.3385)
(0.8236,0.3217)
(0.8602,0.2601)
(0.9528,0.0921)
(0.9966,0.0069)
};

\addplot[
  only marks,
  mark=*,
  mark size=2.9pt,
  red!75!black
] coordinates {
(0.0446,0.9980)
(0.3278,0.8926)
(0.5207,0.7289)
(0.6750,0.5444)
(0.8602,0.2601)
(0.9966,0.0069)
};

\node[anchor=west,align=left,font=\small] at (axis cs:0.08,0.16)
{$q=1$, $k=6$\\selected indices: $1,6,10,15,18,20$\\$\SP\approx 1.9590$};

\end{axis}
\end{tikzpicture}
\caption{A random Pareto-front approximation with $20$ sampled points on the
curve $f_2=1-f_1^2$ (blue), generated with seed $10$. The exact $\ell_1$
Solow--Polasky optimum for $q=1$ and fixed cardinality $k=6$ is shown in red.
Because the front is ordered and monotone, the Bellman recursion from
Theorem~\ref{thm:bellman} applies without change after mapping the points
to their induced line coordinates $t_i=f_{1,i}-f_{2,i}$.}
\label{fig:random-pareto-pgf}
\end{figure}
These examples emphasize that no Euclidean interpolation is used: the relevant
geometry is the ordered $\ell_1$ geometry along the staircase, or equivalently
the induced line coordinate. The next section abstracts this observation from
monotone planar Pareto fronts to signed coordinatewise monotone chains in
arbitrary dimension.
\clearpage
\noindent For reproducibility, Listing~\ref{lst:pareto-dp} gives the compact dynamic program used for the worked five-point example. It is placed on a separate page so that the examples and the code can be read independently.

\vspace{1ex}
\begin{lstlisting}[basicstyle=\ttfamily\footnotesize,caption={Python dynamic programming for the Pareto-front example with $q=1$ and $k=3$.},label={lst:pareto-dp}]
import math

points = [(0.0, 5.0), (2.0, 3.0), (2.5, 2.5), (4.0, 0.5), (5.0, 0.0)]
q = 1.0
k = 3


def l1_distance(p, r):
    return abs(p[0] - r[0]) + abs(p[1] - r[1])


def solve_fixed_cardinality(points, k, q=1.0):
    n = len(points)
    F = [[float("-inf")] * n for _ in range(k + 1)]
    pred = [[None] * n for _ in range(k + 1)]

    for j in range(n):
        F[1][j] = 0.0

    for m in range(2, k + 1):
        for j in range(n):
            best_value = float("-inf")
            best_pred = None
            for i in range(j):
                d = l1_distance(points[i], points[j])
                cand = F[m - 1][i] + math.tanh(q * d / 2.0)
                if cand > best_value + 1e-15:
                    best_value = cand
                    best_pred = i
            F[m][j] = best_value
            pred[m][j] = best_pred

    best_last = max(range(n), key=lambda j: F[k][j])
    value = 1.0 + F[k][best_last]

    subset = []
    m, j = k, best_last
    while m >= 1:
        subset.append(j)
        j = pred[m][j]
        m -= 1
    subset.reverse()
    return value, [i + 1 for i in subset]


value, subset = solve_fixed_cardinality(points, k, q)
print("selected indices:", subset)
print("SP value:", value)
\end{lstlisting}

\clearpage

\section{Monotone \texorpdfstring{$\ell_1$}{l1} staircases in higher dimensions}
\label{sec:l1-staircases}

The planar reduction in Theorem~\ref{thm:l1-planar-reduction} is a special case
of a simple higher-dimensional principle. The essential requirement is not that
the points lie in the plane, but that they can be traversed without backtracking
in any coordinate direction.

\begin{definition}[Signed coordinatewise monotone $\ell_1$ staircase]
Let
\[
S=\{p_1,\dots,p_n\}\subset\R^d.
\]
We say that $S$ admits an $\ell_1$-staircase ordering if there exists a
permutation $\pi$ of $\{1,\dots,n\}$ and a sign vector
\[
\sigma=(\sigma_1,\dots,\sigma_d)\in\{-1,+1\}^d
\]
such that, with $q_i:=p_{\pi(i)}$,
\[
\sigma_\ell q_{1\ell}\le \sigma_\ell q_{2\ell}\le\cdots\le
\sigma_\ell q_{n\ell}
\qquad\text{for every }\ell=1,\dots,d.
\]
Equivalently, after possibly reversing some coordinate axes, the same ordering
sorts all coordinate projections. The associated $\ell_1$ staircase is the
axis-parallel monotone polyline obtained by connecting consecutive points
$q_i,q_{i+1}$ through coordinate-parallel moves.
\end{definition}

For $\sigma=(1,\dots,1)$ this is an ascending staircase. The two-dimensional
Pareto-front case from Theorem~\ref{thm:l1-planar-reduction} corresponds to
$d=2$ and $\sigma=(1,-1)$.

\begin{proposition}[Characterization of ordered $\ell_1$ line reductions]
\label{prop:l1-staircase-characterization}
Let $q_1,\dots,q_n\in\R^d$ be given in a fixed order and let
$d_1(x,y):=\|x-y\|_1$. The following are equivalent.
\begin{enumerate}[label=\emph{(\roman*)}]
\item The ordered points form a signed coordinatewise monotone chain.
\item For every $1\le i<j\le n$,
\[
 d_1(q_i,q_j)=\sum_{r=i}^{j-1}d_1(q_r,q_{r+1}).
\]
\item There are numbers $t_1<\cdots<t_n$ such that, for every $i<j$,
\[
 d_1(q_i,q_j)=t_j-t_i.
\]
\end{enumerate}
When these conditions hold, the similarity matrix
$Z_{ij}=e^{-q d_1(q_i,q_j)}$ is exactly the similarity matrix of the ordered line
set $\{t_1,\dots,t_n\}$. Hence the consecutive-gap formula from
Theorem~\ref{thm:gap-formula} and the Bellman recursion from
Theorem~\ref{thm:bellman} apply without modification. The bottleneck recursion
\eqref{eq:maxmin-bellman} for MPD applies in the same way after
replacing $y_j-y_i$ by $t_j-t_i$. In this sense, coordinatewise monotone
$\ell_1$ staircases are precisely the ordered $\ell_1$ point sets for which
these line-metric dynamic programs apply as direct reductions.
\end{proposition}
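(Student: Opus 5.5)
I would prove the cycle (i)$\Rightarrow$(ii)$\Rightarrow$(iii)$\Rightarrow$(ii)$\Rightarrow$(i), using throughout the coordinate increments $\Delta_{r\ell}:=q_{r+1,\ell}-q_{r\ell}$ for $r=1,\dots,n-1$ and $\ell=1,\dots,d$. In these terms $d_1(q_i,q_j)=\sum_\ell\bigl|\sum_{r=i}^{j-1}\Delta_{r\ell}\bigr|$. I would assume the points $q_1,\dots,q_n$ are pairwise distinct, which is needed anyway for $Z$ to be invertible and for $\SP$ to be defined; without it, (iii) with strict inequalities fails, since repeated points give zero gaps.

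For (i)$\Rightarrow$(ii), the sign vector $\sigma$ gives $\sigma_\ell\Delta_{r\ell}\ge0$ for all $r,\ell$. Hence for each coordinate all increments have one sign, so $\bigl|\sum_{r=i}^{j-1}\Delta_{r\ell}\bigr|=\sum_{r=i}^{j-1}|\Delta_{r\ell}|$. Summing over $\ell$ and exchanging the two sums gives (ii); this is just the computation in the proof of Theorem~\ref{thm:l1-planar-reduction} in $d$ coordinates.

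For (ii)$\Rightarrow$(iii), I set $t_1:=0$ and $t_j:=\sum_{r=1}^{j-1}d_1(q_r,q_{r+1})$. Then (ii) reads exactly $d_1(q_i,q_j)=t_j-t_i$, and distinctness of consecutive points makes every gap positive, so $t_1<\cdots<t_n$. For (iii)$\Rightarrow$(ii), the representation telescopes: $t_j-t_i=\sum_{r=i}^{j-1}(t_{r+1}-t_r)=\sum_{r=i}^{j-1}d_1(q_r,q_{r+1})$.

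The main obstacle is (ii)$\Rightarrow$(i), because a single global sign per coordinate must be extracted from metric data. I would apply (ii) only with $i=1$, $j=n$:
\[
\sum_{\ell=1}^d\Bigl|\sum_{r=1}^{n-1}\Delta_{r\ell}\Bigr|=\sum_{\ell=1}^d\sum_{r=1}^{n-1}|\Delta_{r\ell}|.
\]
Each coordinate satisfies the triangle inequality $\bigl|\sum_r\Delta_{r\ell}\bigr|\le\sum_r|\Delta_{r\ell}|$, and the totals agree, so equality must hold in every coordinate separately. Equality in the triangle inequality for reals means all nonzero $\Delta_{r\ell}$, $r=1,\dots,n-1$, share a common sign. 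Taking $\sigma_\ell$ to be that sign (or $+1$ if all $\Delta_{r\ell}=0$) gives $\sigma_\ell q_{1\ell}\le\cdots\le\sigma_\ell q_{n\ell}$, which is (i).

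For the consequences, (iii) says the whole $\ell_1$ distance matrix of the ordered chain equals that of the ordered line set $\{t_1,\dots,t_n\}$. Since $Z$, $\SP$, and the MPD value depend only on the distance matrix, Theorem~\ref{thm:gap-formula}, Theorem~\ref{thm:bellman}, and recursion \eqref{eq:maxmin-bellman} transfer verbatim with $y_i$ replaced by $t_i$, exactly as in the proof of Theorem~\ref{thm:l1-planar-reduction}. The ``precisely'' sentence is then the implication (iii)$\Rightarrow$(i): any fixed order along which $d_1$ is a line metric must be a signed staircase.
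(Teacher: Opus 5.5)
Your proof is correct and is essentially the paper's argument. The key step (ii)$\Rightarrow$(i) is the same coordinatewise equality-in-the-triangle-inequality argument applied to the pair $q_1,q_n$, and (iii)$\Rightarrow$(ii) is the same telescoping. The only cosmetic difference is that the paper gets (iii) directly from (i) via $t_i=\sum_\ell\sigma_\ell q_{i\ell}$, whereas you take $t_j$ to be the cumulative consecutive gaps. Your explicit distinctness assumption is a genuine small improvement: the staircase definition allows repeated points, and the paper's proof never checks the strict inequalities $t_1<\cdots<t_n$ that (iii) requires.
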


\begin{proof}
Assume first that the points are signed coordinatewise monotone, and choose the
corresponding sign vector $\sigma$. Define
\[
t_i:=\sum_{\ell=1}^d \sigma_\ell q_{i\ell}.
\]
Then, for $i<j$, each signed coordinate difference
$\sigma_\ell(q_{j\ell}-q_{i\ell})$ is nonnegative. Therefore
\[
\|q_j-q_i\|_1
=\sum_{\ell=1}^d |q_{j\ell}-q_{i\ell}|
=\sum_{\ell=1}^d \sigma_\ell(q_{j\ell}-q_{i\ell})
=t_j-t_i.
\]
This proves the line representation, and the additivity over consecutive gaps
follows immediately.

Conversely, suppose that the additivity condition in \emph{(ii)} holds. Write
$\Delta_{r\ell}:=q_{r+1,\ell}-q_{r\ell}$. Applying the triangle inequality in
each coordinate to the pair $q_1,q_n$ gives
\[
\sum_{\ell=1}^d\left|\sum_{r=1}^{n-1}\Delta_{r\ell}\right|
\le
\sum_{\ell=1}^d\sum_{r=1}^{n-1}|\Delta_{r\ell}|.
\]
Condition \emph{(ii)} says that equality holds. Hence equality must hold in each
coordinate separately, which is possible only if, for each fixed coordinate
$\ell$, all nonzero increments $\Delta_{r\ell}$ have the same sign. Thus every
coordinate sequence $q_{1\ell},\dots,q_{n\ell}$ is monotone, after possibly
reversing that coordinate. This is exactly \emph{(i)}. Finally,
\emph{(iii)} implies \emph{(ii)} by telescoping with
$t_{r+1}-t_r=d_1(q_r,q_{r+1})$, and the dynamic-programming statement follows
from Theorems~\ref{thm:gap-formula} and~\ref{thm:bellman}.
\end{proof}

\begin{remark}
The proposition should be read as a statement about this exact reduction, not as
a claim that no other special metric structures can admit dynamic programs. It
says that, inside ordered finite $\ell_1$ point sets, the absence of coordinate
backtracking is exactly what turns the $\ell_1$-distance matrix into the
distance matrix of an ordered set on the real line.
\end{remark}

\subsection*{A three-dimensional staircase example}
Consider the ascending chain
\[
q_1=(0,0,0),\qquad
q_2=(1,1,2),\qquad
q_3=(2,3,3),\qquad
q_4=(4,5,6).
\]
The same ordering sorts all three coordinate projections:
\[
0\le 1\le 2\le 4,
\qquad
0\le 1\le 3\le 5,
\qquad
0\le 2\le 3\le 6.
\]
With $\sigma=(1,1,1)$, the induced line coordinates are
\[
t_1=0,
\qquad
 t_2=4,
\qquad
 t_3=8,
\qquad
 t_4=15.
\]
Thus the consecutive $\ell_1$ gaps are $4,4,7$, and the total staircase length
from $q_1$ to $q_4$ is $15$. Figure~\ref{fig:3d-l1-staircase} illustrates the
axis-parallel path and two light-gray coordinate-plane projections.

\begin{figure}[h]
\centering
\begin{tikzpicture}[
  x={(1.08cm,0cm)},
  y={(0.84cm,0.35cm)},
  z={(0cm,0.97cm)},
  >=latex,
  line join=round,
  line cap=round
]

  \draw[->,line width=1.0pt] (0,0,0) -- (7.2,0,0)
    node[below,font=\bfseries] {$f_1$};
  \draw[->,line width=1.0pt] (0,0,0) -- (0,7.0,0)
    node[right,font=\bfseries] {$f_2$};
  \draw[->,line width=1.0pt] (0,0,0) -- (0,0,7.8)
    node[left,font=\bfseries] {$f_3$};

  \foreach \t in {0,1,2,3,4,5,6,7}{
    \draw[line width=0.35pt] (\t,0,0) -- +(0,0,-0.08);
    \node[below,font=\scriptsize\bfseries] at (\t,0,0) {\t};
  }

  \foreach \t in {1,2,3,4,5,6}{
    \draw[line width=0.35pt] (0,\t,0) -- +(0.0,-0.18);
    \node[left,font=\scriptsize\bfseries] at (0,\t,0.1) {\t};
  }

  \foreach \t in {0,1,2,3,4,5,6,7}{
    \draw[line width=0.35pt] (0,0,\t) -- +(-0.08,0,0);
    \node[left,font=\scriptsize\bfseries] at (0,0,\t) {\t};
  }

  \draw[densely dashed,line width=1.15pt,gray!55]
    (0,0,0) -- (1,0,0) -- (1,1,0) -- (2,1,0) -- (2,3,0) -- (4,3,0) -- (4,5,0);

  \foreach \p in {(0,0,0),(1,1,0),(2,3,0),(4,5,0)}{
    \fill[gray!30] \p circle (2.8pt);
    \draw[gray!55,line width=0.45pt] \p circle (2.8pt);
  }

  \draw[densely dotted,line width=0.8pt,gray!50] (1,1,2) -- (1,1,0);
  \draw[densely dotted,line width=0.8pt,gray!50] (2,3,3) -- (2,3,0);
  \draw[densely dotted,line width=0.8pt,gray!50] (4,5,6) -- (4,5,0);

  \draw[densely dashed,line width=1.15pt,gray!55]
    (0,0,0) -- (0,1,0) -- (0,1,2) -- (0,3,2) -- (0,3,3) -- (0,5,3) -- (0,5,6);

  \foreach \p in {(0,0,0),(0,1,2),(0,3,3),(0,5,6)}{
    \fill[gray!30] \p circle (2.8pt);
    \draw[gray!55,line width=0.45pt] \p circle (2.8pt);
  }

  \draw[densely dotted,line width=0.8pt,gray!50] (1,1,2) -- (0,1,2);
  \draw[densely dotted,line width=0.8pt,gray!50] (2,3,3) -- (0,3,3);
  \draw[densely dotted,line width=0.8pt,gray!50] (4,5,6) -- (0,5,6);

  \draw[densely dashed,line width=2.0pt,blue!75!black]
    (0,0,0) -- (1,0,0);
  \draw[densely dashed,line width=2.0pt,blue!75!black]
    (1,1,2) -- (2,1,2);
  \draw[densely dashed,line width=2.0pt,blue!75!black]
    (2,3,3) -- (4,3,3);

  \draw[densely dashed,line width=2.0pt,green!50!black]
    (1,0,0) -- (1,1,0);
  \draw[densely dashed,line width=2.0pt,green!50!black]
    (2,1,2) -- (2,3,2);
  \draw[densely dashed,line width=2.0pt,green!50!black]
    (4,3,3) -- (4,5,3);

  \draw[densely dashed,line width=2.0pt,yellow!70!black]
    (1,1,0) -- (1,1,2);
  \draw[densely dashed,line width=2.0pt,yellow!70!black]
    (2,3,2) -- (2,3,3);
  \draw[densely dashed,line width=2.0pt,yellow!70!black]
    (4,5,3) -- (4,5,6);

  \foreach \p/\pos/\lbl in {
    {(0,0,0)}/{below left}/{$\mathbf{q}_1$},
    {(1,1,2)}/{above right}/{$\mathbf{q}_2$},
    {(2,3,3)}/{above right}/{$\mathbf{q}_3$},
    {(4,5,6)}/{above right}/{$\mathbf{q}_4$}
  }{
    \fill[blue!65] \p circle (4.0pt);
    \draw[blue!70!black,line width=0.8pt] \p circle (4.0pt);
    \node[\pos=2pt,fill=white,inner sep=1pt,font=\bfseries] at \p {\lbl};
  }

  \node[anchor=west,font=\bfseries,align=left] at (2.2,5.9,0.35)
    {$\ell_1$ staircase length $=4+5+6=15$};

\end{tikzpicture}
\caption{A three-dimensional ascending $\ell_1$ staircase with colored coordinate
directions. Blue indicates motion in the $f_1$-direction, dark green in
the $f_2$-direction, and dark yellow in the $f_3$-direction.
The light-gray auxiliary polylines and points show the projections onto the
$f_1$-$f_2$ plane and the $f_2$-$f_3$ plane.}
\label{fig:3d-l1-staircase}
\end{figure}
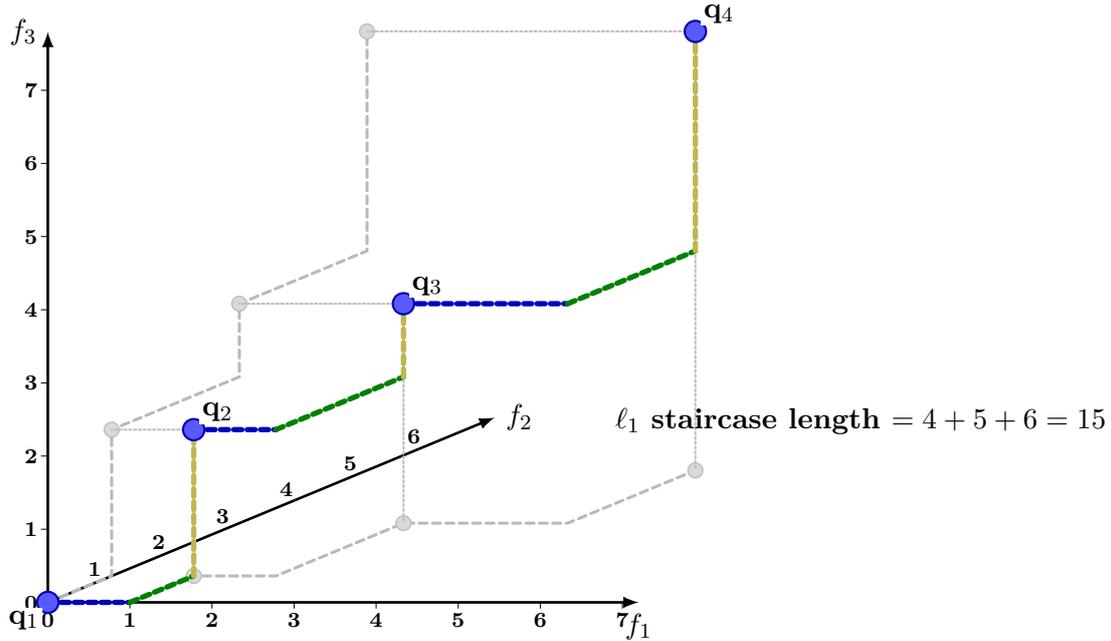

\section{Conclusion and future work}\label{sec:conclusion}

The main conclusion is that fixed-cardinality Solow--Polasky diversity subset
selection, although hard in general metric spaces and even in the Euclidean
plane~\cite{EmmerichPereverdievaDeutz2026Metric,EmmerichPereverdievaDeutz2026Plane},
has an exact polynomial-time solution on ordered one-dimensional sets, on
monotone biobjective Pareto fronts under the $\ell_1$ metric, and on the
higher-dimensional coordinatewise monotone $\ell_1$ staircases characterized in
Proposition~\ref{prop:l1-staircase-characterization}. The reason is that the
finite exponential-kernel objective becomes the magnitude of the scaled metric
space and, on any ordered $\ell_1$ chain without coordinate backtracking, this
value is an additive function of the consecutive selected gaps.

The ordered $\ell_1$ case studied here may thus be viewed as another tractable regime for exact Solow--Polasky diversity optimization. This complements, for example, the finite ultrametric case discussed by Leinster and Meckes~\cite{LeinsterMeckes2016}, where the metric satisfies\(d(x,z)\leq \max\{d(x,y),d(y,z)\}\).

Future work should investigate which tractable subclasses remain beyond these
coordinatewise monotone $\ell_1$ staircases. Natural directions include partially
ordered subsets in $\R^d$ that can be decomposed into a small number of monotone
chains, higher-dimensional Pareto-front structures that are not totally ordered
in all coordinate projections, and structural conditions under which exact
recursions, approximation schemes, or useful bounds can be obtained for
Solow--Polasky subset selection.

A related case of diversity-based subset selection is the problem of minimal Riesz $s$-energy subset selection in ordered point
sets. There, dynamic-programming ideas can exploit an ordering, but the
left--right recurrence considered by Emmerich is approximate and need not return
the exact optimum~\cite{Emmerich2025RieszDP} but becomes exact to the MPD limit case ($s\to\infty$), when it provides subadditive rather than
the additive transition used for Solow--Polasky diversity.

Another related question is the
representation of two-dimensional Pareto fronts of large cardinality by smaller
finite subsets. For the hypervolume indicator and the epsilon-indicator, this
setting has led to algorithms with asymptotically faster running times than the
straightforward left--right dynamic programming used here~\cite{BringmannFriedrichKlitzke2014,KuhnEtAl2016}.
In those cases, the optimized measures can be decomposed into sums of terms that
only depend on neighboring selected points. A possible direction is therefore to
ask whether some of these techniques can be translated to Solow--Polasky
diversity subset selection, either to speed up the present recursion or to
identify further exactly solvable ordered and Pareto-front cases.
A structurally similar dynamic-programming idea also appears in the selection of
representative points from Pareto fronts for the weighted hypervolume
indicator~\cite{AugerBaderBrockhoffZitzler2009}. This further supports the view
that ordered Pareto-front structure can make otherwise difficult subset
selection problems algorithmically tractable.

It is also noteworthy that, for hypervolume subset selection, the hard cases appear when moving from two-dimensional Pareto-front approximations to three and higher dimensions. In the anchored-box formulation, Bringmann, Cabello, and Emmerich prove NP-hardness already in dimension three, while the planar case remains solvable in polynomial time~\cite{BringmannCabelloEmmerich2017}. As with SP diversity subset selection, this contrast suggests a useful higher-dimensional research direction: to identify precisely which diversity indicators and geometric structures admit genuinely exact recursions, and which instead require  approximation methods, or fixed-parameter approaches, for example through tractable dynamic programming on bounded-tree-width structures.

\paragraph{Acknowledgements}
I am grateful to André H. Deutz, Ksenia Pereverdieva,  and Tom Leinster for helpful feedback and for drawing my attention to relevant work.

\end{document}